%
%
%

\documentclass[a4paper,USenglish]{lipics}

\usepackage[utf8]{inputenc}

\usepackage{hyperref}
\usepackage{enumerate}
\usepackage{amsmath}
\usepackage{amssymb}

\usepackage{listings}
\lstdefinelanguage{pseudo}{
	morekeywords={if, then, else, while, foreach, do, assume, assert, let, return},
	sensitive=false,
	morecomment=[l]{//},
	morecomment=[s]{/*}{*/},
	morestring=[b]",
}
\lstset{
	mathescape=true,
	language=pseudo
}

\AtBeginDocument{

}

\newcommand{\loopt}{{\scriptstyle\mathrm{LOOP}}} 
\newcommand{\stemt}{{\scriptstyle\mathrm{STEM}}} 
\newcommand{\abovebelow}[2]{
\left(\begin{smallmatrix}
{#1} \\ {#2}
\end{smallmatrix}\right)
} 




\title{Geometric Series as Nontermination Arguments for Linear Lasso Programs}

\author[1]{Jan Leike}
\author[2]{Matthias Heizmann}
\authorrunning{J.\ Leike and M.\ Heizmann}

\affil[1]{The Australian National University \\
  Canberra, Australia \\
  \texttt{jan.leike@anu.edu.au}}
\affil[2]{University of Freiburg \\
  Freiburg, Germany \\
  \texttt{heizmann@informatik.uni-freiburg.de}}

\Copyright{Jan Leike and Matthias Heizmann}

\subjclass{
  D.2.4 Software/Program Verification
} 
\keywords{Nontermination analysis, Infinite execution, Constraint-based synthesis, Linear lasso program}

\serieslogo{}
\volumeinfo
  {Billy Editor and Bill Editors}
  {2}
  {Conference title on which this volume is based on}
  {1}
  {1}
  {1}
\EventShortName{}


\begin{document}

\maketitle

\begin{abstract}
We present a new kind of nontermination argument for linear lasso programs, called \emph{geometric nontermination argument}.
A geometric nontermination argument is a finite representation of an infinite execution of the form 
$(\vec{x} + \sum_{i=0}^t \lambda^i \vec{y})_{t \geq 0}$.
The existence of this nontermination argument can be stated as a set of nonlinear algebraic constraints.
We show that every linear loop program that has a bounded infinite execution also has a geometric nontermination argument.
Furthermore, we discuss nonterminating programs that do not have a geometric nontermination argument.
\end{abstract}


\section{Introduction}
\label{sec:introduction}
The problem of automatically proving termination of programs has been extensively studied. 
For restricted classes of programs there are
methods proving termination~\cite{Braverman06,Tiwari04} and hence nontermination follows from the absence of a termination proof.
For broader classes of programs no complete method for proving termination is known or termination is undecidable.
Methods that address these broader classes of programs only check the existence of a certain kind of termination argument,
e.g.\ a specific kind of ranking function.
The existence of this termination argument proves termination,
however the absence of such a termination argument does not imply nontermination
and hence these termination analyses cannot be used to prove nontermination.

Analyses for nontermination proceed in a similar manner.
They do not check the existence of a general nontermination proof,
instead they check for the existence of a certain kind of nontermination argument,
e.g.\ a recurrence set~\cite{GHMRX08, Rybalchenko10} or
an underapproximation of the program that does not terminate for any input~\cite{CCFNO14}.

In this paper we present a new kind of nontermination argument for linear lasso programs, called \emph{geometric nontermination argument}.
A geometric nontermination argument is a finite representation of an infinite execution that can be denoted as a geometric series.
The existence of a geometric nontermination argument can be encoded by a set of nonlinear constraints.
Over the reals these constraints are decidable.
The advantage of our nontermination arguments lies in their simplicity.
In contrast to recurrence sets~\cite{GHMRX08, Rybalchenko10},
the constraints that state the existence of our geometric nontermination arguments do not contain quantifier alternation and contain only a small number of nonlinear terms.
Unlike~\cite{CCFNO14} we do not need a safety checker to compute nontermination arguments.


\section{Preliminaries}
\label{sec:preliminaries}
We consider the following class of programs whose states are real-valued vectors.

\begin{definition}[Linear lasso program]\label{def:linear-lassos}
A \emph{(conjunctive) linear lasso program} $P = (\stemt, \loopt)$ consists of two binary relations $\stemt$ and $\loopt$,
that are each defined by a formula whose free variables are $\vec{x}$ and $\vec{x}'$ and that have the form
$A\abovebelow{\vec{x}}{\vec{x}'} \leq \vec{b}$
for some matrix $A \in \mathbb{R}^{n \times m}$
and some vector $\vec{b} \in \mathbb{R}^{m}$.
We call a linear lasso program \emph{linear loop program} if the formula that defines the relation $\stemt$ is equivalent to $true$.
\end{definition}

\begin{definition}[Infinite execution]\label{def:nontermination}
An infinite sequence of states $(\vec{x}_t)_{t \geq 0}$ is an \emph{infinite execution} of the linear lasso program $P = (\stemt,\loopt)$ iff
$(\vec{x}_0, \vec{x}_1) \in \stemt$ and $(\vec{x}_t, \vec{x}_{t+1}) \in \loopt$ for all $t \geq 1$.
\end{definition}


\section{Geometric Nontermination Arguments}
\label{sec:geometric-nontermination-arguments}

\begin{definition}\label{def:nt-argument}
Let $P = (\stemt, \loopt)$ be a linear lasso program such that $\loopt$ is defined by the formula $A \abovebelow{\vec{x}}{\vec{x}'} \leq \vec{b}$.
The tuple $N = (\vec{x}_0, \vec{x}_1, \vec{y}, \lambda)$
is called a \emph{geometric nontermination argument}
for $P$ iff the following properties hold.
\begin{center}
\begin{minipage}{8cm}
\begin{itemize}
\itemsep0.5mm
\setlength{\itemindent}{4em}
\item[(domain)] $\vec{x}_0, \vec{x}_1, \vec{y} \in \mathbb{R}^n$,
$\lambda \in \mathbb{R}$ and $\lambda > 0$.
\item[(init)]
  $(\vec{x}_0, \vec{x}_1) \in \stemt$
\item[(point)]
  $A \abovebelow{\vec{x}_1}{\vec{x}_1 + \vec{y}} \leq \vec{b}$
\item[(ray)]
  $A \abovebelow{\vec{y}}{\lambda \vec{y}} \leq \vec{0}$
\end{itemize}
\end{minipage}
\end{center}
\end{definition}
The constraints (init), (point), and (ray) given in Definition~\ref{def:nt-argument} are (quantifier free) nonlinear algebraic constraints,
the existence of a solution is decidable~\cite{Tarski51},
and hence the existence of a geometric nontermination argument is decidable.
We can check the existence of a geometric nontermination argument by passing the constraints of Definition~\ref{def:nt-argument} to an SMT solver for nonlinear real arithmetic~\cite{JM12}.
If a satisfying assignment is found,
this constitutes a nontermination proof in form of an infinite execution according to the following theorem.

\begin{theorem}[Soundness]\label{thm:soundness}
If the conjunctive linear lasso program $P = (\stemt, \loopt)$ has a geometric nontermination argument $N = (\vec{x}_0, \vec{x}_1, \vec{y}, \lambda)$ then $P$ has the following infinite execution.
$$
\vec{x}_0, \;
\vec{x}_1, \;
\vec{x}_1 + \vec{y}, \;
\vec{x}_1 + (1 + \lambda) \vec{y}, \;
\vec{x}_1 + (1 + \lambda + \lambda^2) \vec{y}, \;
\ldots
$$
\end{theorem}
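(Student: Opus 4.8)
The plan is to verify directly that the listed sequence meets the two conditions of Definition~\ref{def:nontermination}. Writing the sequence as $(\vec{x}_t)_{t \geq 0}$, I would first record its closed form: for $t \geq 1$ we have $\vec{x}_t = \vec{x}_1 + s_t \vec{y}$, where $s_t := \sum_{i=0}^{t-2} \lambda^i$ is the partial geometric sum (empty for $t=1$, so $s_1 = 0$, $s_2 = 1$, $s_3 = 1+\lambda$, and so on), which indeed reproduces the displayed terms. The first condition, $(\vec{x}_0, \vec{x}_1) \in \stemt$, is literally the (init) property and needs no further argument, so all the work lies in establishing $(\vec{x}_t, \vec{x}_{t+1}) \in \loopt$, i.e. $A \abovebelow{\vec{x}_t}{\vec{x}_{t+1}} \leq \vec{b}$, for every $t \geq 1$.

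The central observation I would exploit is an algebraic identity that rewrites each consecutive pair as the (point) pair plus a scalar multiple of the (ray) pair. Using $s_{t+1} = s_t + \lambda^{t-1}$ together with the geometric relation $s_{t+1} - 1 = \lambda s_t$ (valid for all $t \geq 1$, since $s_1 = 0$), one obtains
\[
\abovebelow{\vec{x}_t}{\vec{x}_{t+1}} = \abovebelow{\vec{x}_1}{\vec{x}_1 + \vec{y}} + s_t \abovebelow{\vec{y}}{\lambda \vec{y}}.
\]
The top component is immediate from $\vec{x}_t - \vec{x}_1 = s_t \vec{y}$; the bottom component is the computation $\vec{x}_{t+1} - (\vec{x}_1 + \vec{y}) = (s_{t+1} - 1)\vec{y} = \lambda s_t \vec{y}$. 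For $t = 1$ the coefficient $s_1 = 0$ collapses the identity to the (point) pair itself, covering the base case.

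Applying the linear map $\vec{w} \mapsto A\vec{w}$ to both sides (and using that vertical stacking is linear) gives $A\abovebelow{\vec{x}_t}{\vec{x}_{t+1}} = A\abovebelow{\vec{x}_1}{\vec{x}_1 + \vec{y}} + s_t\, A\abovebelow{\vec{y}}{\lambda \vec{y}}$. The first summand is bounded by $\vec{b}$ through (point), and $A\abovebelow{\vec{y}}{\lambda \vec{y}} \leq \vec{0}$ through (ray). The one genuine subtlety is that scaling an inequality preserves its direction only under a \emph{nonnegative} factor, so I would invoke the (domain) hypothesis $\lambda > 0$: this makes every power $\lambda^i$ positive, hence $s_t \geq 0$, and therefore $s_t\, A\abovebelow{\vec{y}}{\lambda \vec{y}} \leq \vec{0}$. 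Adding the two bounds yields $A\abovebelow{\vec{x}_t}{\vec{x}_{t+1}} \leq \vec{b} + \vec{0} = \vec{b}$, which is exactly the loop condition. The main (and only modest) obstacle is thus bookkeeping: spotting the geometric identity and tracking the sign of $s_t$; once these are in place, preservation of the inequality under nonnegative scaling finishes the proof.
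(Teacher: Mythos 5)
Your proof is correct and follows essentially the same route as the paper's: both decompose the stacked pair $\abovebelow{\vec{x}_t}{\vec{x}_{t+1}}$ as the (point) pair plus a nonnegative geometric partial sum times the (ray) pair, apply $A$, and add the two resulting inequalities. Your version is slightly more careful about the index bookkeeping and makes explicit the sign condition $s_t \geq 0$ (from $\lambda > 0$) that the paper leaves implicit.
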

\begin{proof}
Define $\vec{z}_0 := \vec{x}_0$ and
$
\vec{z}_t := \vec{x}_1 + \sum_{i=0}^t \lambda^i \vec{y}.
$
Then $(\vec{z}_t)_{t \geq 0}$ is an infinite execution of $P$:
by (init), $(\vec{z}_0, \vec{z}_1) = (\vec{x}_0, \vec{x}_1) \in \stemt$ and
\begin{align*}
A \abovebelow{\vec{z}_t}{\vec{z}_{t+1}}
= A \abovebelow{\vec{x}_1 + \sum_{i=0}^t \lambda^i \vec{y}}{\vec{x}_1 + \sum_{i=0}^{t+1} \lambda^i \vec{y}}
= A \abovebelow{\vec{x}_1}{\vec{x}_1 + \vec{y}} + \sum_{i=0}^t \lambda^i A \abovebelow{\vec{y}}{\lambda \vec{y}}
\leq \vec{b} + \sum_{i=0}^t \lambda^i \vec{0}
= \vec{b},
\end{align*}
by (point) and (ray).
\end{proof}

\begin{example}\label{ex:running}
Consider the linear loop program $P = (true, \loopt)$ depicted as pseudocode on the left and whose relation $\loopt(a, b, a', b')$ is defined by the formula depicted on the right.
\begin{center}
\begin{minipage}{30mm}
\begin{lstlisting}
while ($a \geq 7$):
    $a$ := $b$;
    $b$ := $a + 1$;
\end{lstlisting}
\end{minipage}
\hspace{10mm}
\scriptsize
$
\left(\begin{matrix}
-1 & 0 & 0 & 0 \\
0 & -1 & 1 & 0 \\
0 & 1 & -1 & 0 \\
-1 & 0 & 0 & 1 \\
1 & 0 & 0 & -1 
\end{matrix}\right)
\left(\begin{matrix}
a \\ b \\ a' \\ b'
\end{matrix}\right)
\leq
\left(\begin{matrix}
7 \\ 0 \\ 0 \\ 1 \\ 1
\end{matrix}\right)
$
\end{center}
Note that in this example,
the relation $\loopt$ is defined by an affine-linear transformation
and a guard $a \geq 7$.
In general,
linear lasso programs are defined with linear constraints,
which also allow nondeterministic updates of variables.

For $x_0 = \abovebelow{7}{8}$, $x_1 = \abovebelow{7}{8}$, $y = \abovebelow{1}{1}$ and $\lambda = 1$,
the tuple $N = (x_0, x_1, y, \lambda)$ is a geometric nontermination argument and the following sequence of states is an infinite execution of $P$.
$$\abovebelow{7}{8}, \abovebelow{7}{8}, \abovebelow{8}{9}, \abovebelow{9}{10}, \abovebelow{10}{11}, \dots$$
\end{example}



We are able to decide the existence of a geometric nontermination argument,
however we are not able to decide the existence of an infinite execution
because there are programs that have an infinite execution but no geometric nontermination argument as the following example illustrates.

\begin{example}\label{ex:difficult}
The following linear lasso program has an infinite execution, e.g.\ $\abovebelow{2^t}{3^t}_{t \geq 0}$, but it
does not have a geometric nontermination argument.
\begin{center}
\begin{minipage}{41mm}
\begin{lstlisting}
while ($a \geq 1 \;\land\; b \geq 1$):
    $a$ := $2 \cdot a$;
    $b$ := $3 \cdot b$;
\end{lstlisting}
\end{minipage}
\end{center}
\end{example}


\section{Bounded Infinite Executions}
\label{sec:bounded-infinite-executions}
In this section we show that we can always prove nontermination of linear loop programs if there is a bounded infinite execution.

Let $|\cdot|: \mathbb{R}^n\rightarrow \mathbb{R}$ denote some norm.
We call an infinite execution $(\vec{x}_t)_{t \geq 0}$ \emph{bounded} iff
there is a real number $d \in \mathbb{R}$ such that for each state its norm in bounded by $d$,
i.e.\ $|\vec{x}_t|\leq d$ for all $t$.



\begin{lemma}[Fixed Point]\label{lem:fixed-point}
Let $P = (true, \loopt)$ be a linear loop program. 
The loop $P$ has a bounded infinite execution
if and only if
there is a fixed point $\vec{x}^\ast \in \mathbb{R}^n$ such that
$(\vec{x}^\ast, \vec{x}^\ast) \in \loopt$.
\end{lemma}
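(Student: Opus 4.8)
The statement is an equivalence, so the plan is to prove the two directions separately, and they are of very different difficulty. The direction from right to left is immediate: given a fixed point $\vec{x}^\ast$ with $(\vec{x}^\ast, \vec{x}^\ast) \in \loopt$, the constant sequence $\vec{x}_t := \vec{x}^\ast$ is a bounded infinite execution. Since the stem relation is $true$, the requirement $(\vec{x}_0, \vec{x}_1) \in \stemt$ holds trivially; every loop step $(\vec{x}_t, \vec{x}_{t+1}) = (\vec{x}^\ast, \vec{x}^\ast)$ lies in $\loopt$ by assumption; and the sequence is bounded by $|\vec{x}^\ast|$.

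For the forward direction I would use an averaging (Ces\`aro) argument combined with compactness. Suppose $(\vec{x}_t)_{t \geq 0}$ is a bounded infinite execution, say $|\vec{x}_t| \leq d$ for all $t$, so that each loop transition satisfies $A \abovebelow{\vec{x}_t}{\vec{x}_{t+1}} \leq \vec{b}$ for $t \geq 1$. The idea is to average these inequalities: set
$$\vec{u}_T := \frac{1}{T}\sum_{t=1}^{T} \vec{x}_t \qquad\text{and}\qquad \vec{v}_T := \frac{1}{T}\sum_{t=1}^{T} \vec{x}_{t+1}.$$
By linearity of the map $\vec{w} \mapsto A\vec{w}$, averaging the $T$ valid inequalities yields $A \abovebelow{\vec{u}_T}{\vec{v}_T} \leq \vec{b}$ for every $T \geq 1$; equivalently, this is just convexity of the solution set of $A\vec{w} \leq \vec{b}$.

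The key observation is that the source average $\vec{u}_T$ and the target average $\vec{v}_T$ become arbitrarily close: the overlapping sums telescope to the boundary term $\vec{v}_T - \vec{u}_T = \frac{1}{T}(\vec{x}_{T+1} - \vec{x}_1)$, whose norm is at most $2d/T$ and hence tends to $0$. Since all $\vec{u}_T$ lie in the bounded set $\{\vec{w} : |\vec{w}| \leq d\}$, Bolzano--Weierstrass supplies a subsequence $(\vec{u}_{T_k})_k$ converging to some $\vec{x}^\ast$, and by the telescoping bound $\vec{v}_{T_k} \to \vec{x}^\ast$ as well. Because the polyhedron $\{\vec{w} : A\vec{w} \leq \vec{b}\}$ is closed, passing to the limit in $A \abovebelow{\vec{u}_{T_k}}{\vec{v}_{T_k}} \leq \vec{b}$ gives $A \abovebelow{\vec{x}^\ast}{\vec{x}^\ast} \leq \vec{b}$, that is, $(\vec{x}^\ast, \vec{x}^\ast) \in \loopt$.

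I expect the main obstacle to be the forward direction, specifically the step of extracting an actual fixed point from a bounded execution that need neither converge nor be periodic. The crucial insight is that averaging simultaneously preserves the loop constraint (by linearity/convexity) and collapses the gap between the consecutive-state averages into a vanishing boundary term; once this is set up, compactness of bounded sequences and closedness of the constraint polyhedron finish the argument routinely. A minor point to handle carefully is that the transition relation is only guaranteed for $t \geq 1$, so the averaging index should start at $t = 1$ rather than $t = 0$.
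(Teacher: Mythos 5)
Your proof is correct and follows essentially the same route as the paper: both directions are handled identically, and the forward direction in both cases averages the states (Ces\`aro means), uses convexity of the polyhedron $Q = \{\vec{w} : A\vec{w} \leq \vec{b}\}$ to keep the averaged transition inside $Q$, observes that the gap between source and target averages telescopes to a boundary term of norm $O(1/T)$, and concludes by closedness of $Q$. The one substantive difference is how convergence of the averages is obtained: the paper asserts that $(\vec{z}_k)$ is Cauchy because $|\vec{z}_k - \vec{z}_{k+1}| \leq 2d/(k+1) \to 0$, which is not a valid inference on its own (vanishing consecutive differences do not imply Cauchy, and the bounds $2d/(k+1)$ are not summable), whereas you extract a convergent subsequence via Bolzano--Weierstrass from the boundedness $|\vec{u}_T| \leq d$. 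Your version is therefore the more careful one at precisely the step where the paper's write-up has a gap, and a subsequence is all that is needed since closedness of $Q$ only requires one limit point.
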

\begin{proof}
If there is a fixed point $\vec{x}^\ast$, then the loop has the infinite bounded
execution $\vec{x}^\ast, \vec{x}^\ast, \ldots$.
Conversely, let $(\vec{x}_t)_{t \geq 0}$ be an infinite bounded execution.
Boundedness implies that there is an $d \in \mathbb{R}$ such that $|\vec{x}_t| \leq d$ for all $t$.
Consider the sequence $\vec{z}_k := \frac{1}{k} \sum_{t=1}^k \vec{x}_t$.
\begin{align*}
| \vec{z}_k - \vec{z}_{k+1} |
&= \left| \frac{1}{k} \sum_{t=1}^k \vec{x}_t - \frac{1}{k+1} \sum_{t=1}^{k+1} \vec{x}_t \right|
 = \frac{1}{k(k+1)} \left| (k+1) \sum_{t=1}^k \vec{x}_t -  k \sum_{t=1}^{k+1} \vec{x}_t \right| \\
&= \frac{1}{k(k+1)} \left| \sum_{t=1}^k \vec{x}_t - k \vec{x}_{k+1} \right|
\leq \frac{1}{k(k+1)} \left( \sum_{t=1}^k |\vec{x}_t| + k |\vec{x}_{k+1}| \right) \\
&\leq \frac{1}{k(k+1)} (k\cdot d + k\cdot d)
= \frac{2d}{k+1} \longrightarrow 0 \text{ as } k \to \infty.
 \end{align*}
Hence the sequence $(\vec{z}_k)_{k \geq 1}$ is a Cauchy sequence
and thus converges to some $\vec{z}^\ast \in \mathbb{R}^n$.
We will show that $\vec{z}^\ast$ is the desired fixed point.

For all $t$, the polyhedron $Q := \{ \abovebelow{\vec{x}}{\vec{x}'} \mid A \abovebelow{\vec{x}}{\vec{x}'} \leq b \}$ contains $\abovebelow{\vec{x}_t}{\vec{x}_{t+1}}$ and is convex.
Therefore for all $k \geq 1$,
$$
\frac{1}{k} \sum_{t=1}^k \abovebelow{\vec{x}_t}{\vec{x}_{t+1}} \in Q.
$$
Together with
$$
\abovebelow{\vec{z}_k}{\frac{k+1}{k} \vec{z}_{k+1}}
= \frac{1}{k} \abovebelow{\vec{0}}{\vec{x}_1} + \frac{1}{k} \sum_{t=1}^k \abovebelow{\vec{x}_t}{\vec{x}_{t+1}}
$$
we infer
$$
\left( \abovebelow{\vec{z}_k}{\frac{k+1}{k} \vec{z}_{k+1}} - \frac{1}{k} \abovebelow{\vec{0}}{\vec{x}_1} \right) \in Q,
$$
and since $Q$ is closed we have
$$
\abovebelow{\vec{z}^\ast}{\vec{z}^\ast} =
\lim_{k \to \infty}
  \left( \abovebelow{\vec{z}_k}{\frac{k+1}{k} \vec{z}_{k+1}} - \frac{1}{k} \abovebelow{\vec{0}}{\vec{x}_1} \right) \in Q.
\qedhere
$$
\end{proof}

Because fixed points give rise to trivial geometric nontermination arguments,
we can derive a criterion for the existence of geometric nontermination arguments from Lemma~\ref{lem:fixed-point}.

\begin{corollary}
If the linear loop program $P = (true, \loopt)$ has a bounded infinite execution,
then it has a geometric nontermination argument.
\end{corollary}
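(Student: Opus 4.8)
The plan is to invoke the Fixed Point Lemma (Lemma~\ref{lem:fixed-point}) directly and then read off a degenerate geometric nontermination argument. Since $P = (true, \loopt)$ has a bounded infinite execution by hypothesis, the lemma immediately yields a fixed point $\vec{x}^\ast \in \mathbb{R}^n$ with $(\vec{x}^\ast, \vec{x}^\ast) \in \loopt$. From this fixed point I would construct the \emph{trivial} geometric nontermination argument whose ray vanishes, namely $N = (\vec{x}^\ast, \vec{x}^\ast, \vec{0}, 1)$, and verify that it satisfies Definition~\ref{def:nt-argument}.

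The verification proceeds by checking the four defining properties in turn. For (domain), all of $\vec{x}_0 = \vec{x}_1 = \vec{x}^\ast$ and $\vec{y} = \vec{0}$ lie in $\mathbb{R}^n$, and $\lambda = 1 > 0$. Property (init) is automatic because $\stemt$ is defined by the formula $true$, so $(\vec{x}^\ast, \vec{x}^\ast) \in \stemt$ holds vacuously. Property (point) requires $A \abovebelow{\vec{x}^\ast}{\vec{x}^\ast + \vec{0}} \leq \vec{b}$, which is literally the membership $(\vec{x}^\ast, \vec{x}^\ast) \in \loopt$ furnished by the lemma. Finally, (ray) asks that $A \abovebelow{\vec{0}}{1 \cdot \vec{0}} \leq \vec{0}$; here the left-hand side is $A \vec{0} = \vec{0}$, so the inequality holds trivially.

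I do not anticipate any genuine obstacle in this step: all of the analytic difficulty has already been discharged inside the Fixed Point Lemma, whose Ces\`aro-averaging argument is what actually produces $\vec{x}^\ast$. Once the fixed point is available, the corollary is merely a matter of exhibiting the degenerate argument with $\vec{y} = \vec{0}$ and observing that each of the four conditions collapses to a triviality. One may note, as a sanity check, that feeding $N$ into the Soundness theorem (Theorem~\ref{thm:soundness}) recovers precisely the constant infinite execution $\vec{x}^\ast, \vec{x}^\ast, \vec{x}^\ast, \ldots$ associated with the fixed point, which is consistent with the bounded execution we started from.
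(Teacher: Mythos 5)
Your proof is correct and follows exactly the paper's own argument: apply Lemma~\ref{lem:fixed-point} to obtain a fixed point $\vec{x}^\ast$ and exhibit the degenerate geometric nontermination argument with $\vec{y} = \vec{0}$ and $\lambda = 1$. Your verification is in fact slightly more explicit than the paper's, which does not bother to spell out (domain) and (init).
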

\begin{proof}
By Lemma \ref{lem:fixed-point} there is a fixed point $\vec{x}^\ast$ such that
$(\vec{x}^\ast, \vec{x}^\ast) \in \loopt$.
We choose
$\vec{x}_1 = \vec{x}^\ast$,
$\vec{y} = \vec{0}$, and
$\lambda = 1$,
which satisfies (point) and (ray) and thus is a geometric nontermination argument for $P$.
\end{proof}


\begin{example}\label{ex:strict}
Note that according to our definition of a linear lasso program, the relation $\loopt$ is a topologically closed set. 
If we allowed the formula defining $\loopt$ to also contain strict equalities, Lemma~\ref{lem:fixed-point} no longer holds:
the following program is nonterminating and has a bounded infinite execution,
but it does not have a fixed point.
However, the topological closure of the relation $\loopt$ contains the fixed point $x^* = 0$.
\begin{center}
\begin{minipage}{29mm}
\begin{lstlisting}
while ($x > 0$):
    $x$ := $\tfrac{1}{2} \cdot x$;
\end{lstlisting}
\end{minipage}
\vspace{-4mm}
\end{center}
\end{example}

\section{Discussion}
\label{sec:discussion}

\subsection{Recurrence Sets}
\label{ssec:recurrence-sets}

Nontermination arguments related to ours are \emph{recurrence sets}~\cite{GHMRX08, Rybalchenko10}.
A recurrence set $S$ is a set of states such that
\begin{itemize}
 \item at least one state of $S$ is in the range of $\stemt$,  i.e.\ 
 $$\exists \vec{x}, \vec{x}'. (\vec{x},\vec{x}')\in\loopt\land \vec{x'}\in S, \text{ and}$$
 \item for each state in $S$ there is at least one $\loopt$-successor that is in $S$, i.e., 
 $$\forall \vec{x}. \vec{x}\in S \rightarrow \exists \vec{x}' (\vec{x},\vec{x}')\in\loopt .$$
\end{itemize}
If we restrict the form of $S$ to a convex polyhedron,
we can encode its existence using algebraic constraints~\cite{GHMRX08, Rybalchenko10}
and hence decide the existence of such a recurrence set.
However these algebraic constraints are not easy to solve;
they contain nonlinear arithmetic and quantifier alternation that cannot be eliminated with Farkas lemma if the program is nondeterministic.
In contrast to these constraints, our constraints (init), (point), and (ray) contain at most one nonlinear term for each dimension of the state space.

However, recurrence sets are more general nontermination arguments than geometric nontermination arguments as shown by the following lemma.

\begin{lemma}
Let $P = (\stemt, \loopt)$ be a linear lasso program and 
$N = (\vec{x}_0, \vec{x}_1, \vec{y}, \lambda)$ be a geometric nontermination argument for $P$.
The following set $S$ is a recurrence set for $P$.
$$
S = \Big\{ \vec{x}_1 + \sum_{i = 0}^t \lambda^i \vec{y} \mid t \in \mathbb{N} \Big\}
$$
\end{lemma}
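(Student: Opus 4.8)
The plan is to verify the two defining properties of a recurrence set directly, reusing almost verbatim the telescoping computation from the Soundness theorem (Theorem~\ref{thm:soundness}). Throughout I would write $\vec{s}_t := \vec{x}_1 + \sum_{i=0}^t \lambda^i \vec{y}$, so that $S = \{ \vec{s}_t \mid t \in \mathbb{N} \}$ and, crucially, $\vec{s}_{t+1} = \vec{s}_t + \lambda^{t+1}\vec{y}$ again lies in $S$. Since $N$ is a geometric nontermination argument, I may freely invoke (init), (point), and (ray), together with $\lambda > 0$.

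For the first property (some state of $S$ is reached), I would exhibit the witness pair $(\vec{x}_1, \vec{x}_1 + \vec{y})$. By (point) we have $A\abovebelow{\vec{x}_1}{\vec{x}_1 + \vec{y}} \leq \vec{b}$, i.e.\ $(\vec{x}_1, \vec{x}_1 + \vec{y}) \in \loopt$, and its second component equals $\vec{s}_0 \in S$. Thus $S$ contains a $\loopt$-successor, and by (init) the state $\vec{x}_1$ from which it is produced lies in the range of $\stemt$, so $S$ is genuinely entered from the stem; this covers the entry condition under either reading of the bullet.

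For the second property (every state of $S$ has a successor in $S$), I would fix an arbitrary $\vec{s}_t \in S$ and show $(\vec{s}_t, \vec{s}_{t+1}) \in \loopt$; since $\vec{s}_{t+1} \in S$ by construction, this suffices. This is exactly the inequality established in the proof of Theorem~\ref{thm:soundness}: decomposing
$$ A \abovebelow{\vec{s}_t}{\vec{s}_{t+1}} = A \abovebelow{\vec{x}_1}{\vec{x}_1 + \vec{y}} + \sum_{i=0}^t \lambda^i A \abovebelow{\vec{y}}{\lambda \vec{y}}, $$
the first summand is $\leq \vec{b}$ by (point), and every term of the sum is $\leq \vec{0}$ because (ray) gives $A\abovebelow{\vec{y}}{\lambda\vec{y}} \leq \vec{0}$ and $\lambda > 0$ makes each coefficient $\lambda^i$ nonnegative, so scaling preserves the inequality. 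Hence $A\abovebelow{\vec{s}_t}{\vec{s}_{t+1}} \leq \vec{b}$.

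I expect no genuine obstacle: the heart of the argument, the telescoping identity, has already been carried out for the Soundness theorem, and here it is merely repackaged so that consecutive partial sums of the series witness the successor requirement. The only points demanding care are bookkeeping ones --- confirming that the algebraic decomposition of $\abovebelow{\vec{s}_t}{\vec{s}_{t+1}}$ is valid (the lower row relies on $\vec{y} + \sum_{i=0}^t \lambda^{i+1}\vec{y} = \sum_{i=0}^{t+1}\lambda^i\vec{y}$), that $\lambda > 0$ is precisely what licenses scaling the (ray) inequality, and that the entry condition must be met by routing through (point) to land on $\vec{s}_0$ rather than expecting $\vec{x}_1$ itself to belong to $S$.
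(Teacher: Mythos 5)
Your proof is correct and follows essentially the same route as the paper: the successor condition is exactly the telescoping inequality from the proof of Theorem~\ref{thm:soundness}, applied to consecutive partial sums. If anything, you are more careful than the paper on the entry condition --- the paper simply asserts that $\vec{x}_1$ is in the range of $\stemt$ by (init), even though $\vec{x}_1$ itself need not lie in $S$, whereas you explicitly route through (point) to exhibit $(\vec{x}_1, \vec{x}_1+\vec{y}) \in \loopt$ with $\vec{x}_1+\vec{y} \in S$.
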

\begin{proof}
The state $\vec{x}_1$ is in the range of $\stemt$ by (init).
Furthermore, for $\vec{x}_1 + \sum_{i = 0}^t \lambda^i \vec{y} \in S$,
$\vec{x}_1 + \sum_{i = 0}^{t+1} \lambda^i \vec{y} \in S$ and
$(\vec{x}_1 + \sum_{i = 0}^t \lambda^i \vec{y},\; \vec{x}_1 + \sum_{i = 0}^{t+1} \lambda^i \vec{y}) \in \loopt$ according to the proof of \autoref{thm:soundness}.
\end{proof}

Furthermore, for every geometric nontermination argument $N = (\vec{x}_0, \vec{x}_1, \vec{y}, \lambda)$ there exists a recurrence set $S$ that is a polyhedron.
$$
S = \{ \vec{x} \in \mathbb{R}^n \mid
  \vec{y}^T (\vec{x} - \vec{x}_1) \geq 0 \;\land\;
  \bigwedge_{i \in I} \vec{z}_i^T (\vec{x} - \vec{x}_1) = 0 \},
$$
where $(\vec{z}_i)_{i \in I}$ is a span of the vector space orthogonal to $\vec{y}$.
(For $\lambda < 1$ we need to add the additional constraint
$\vec{y}^T (\vec{x} - \vec{x}_1) \leq \vec{y}^T (\vec{x}_1 + \frac{1}{1 - \lambda} \vec{y})$.)
%

\subsection{Integers vs.\ Reals}

A nonterminating program over the reals may terminate over the integers.
If we restrict the states of the linear lasso program to integer-valued vectors,then \autoref{thm:soundness} only holds if we restrict the values for the variables $\vec{x}_0, \vec{x}_1, \vec{y}$, and $\lambda$ in the constraints (init), (point), and (ray) to integers.
Satisfiability of nonlinear arithmetic over the integers is undecidable and we do not know if our constraints fall into a decidable subclass of this problem. However, we may fix the value of $\lambda$ in advance to a finite set of values. If we do so, we do not have completeness (we may not find every geometric nontermination argument) but we obtain linear arithmetic constraints, which can be solved efficiently.


\bibliographystyle{plain}
\bibliography{references}

\end{document}